\documentclass[11pt]{amsart}

\usepackage[T1]{fontenc}
\usepackage{lmodern}
\usepackage{microtype}
\usepackage{amsmath,amssymb,mathtools,bm}
\usepackage[hidelinks]{hyperref}
\usepackage[nameinlink,noabbrev]{cleveref}
\usepackage{enumitem}
\usepackage{booktabs}
\usepackage{geometry}
\newtheorem{theorem}{Theorem}[section]
\newtheorem{proposition}[theorem]{Proposition}
\newtheorem{lemma}[theorem]{Lemma}
\newtheorem{corollary}[theorem]{Corollary}
\theoremstyle{definition}

\theoremstyle{remark}
\newtheorem{remark}[theorem]{Remark}

\DeclareMathOperator{\Tr}{Tr}
\DeclareMathOperator{\diag}{diag}
\newcommand{\Md}{M_d(\mathbb C)}
\newcommand{\Dplus}{\mathcal D_d^{+}}
\newcommand{\Htr}{\mathcal H_d^{0}}
\newcommand{\Id}{\mathbf 1}
\newcommand{\HS}{\mathrm{HS}}
\newcommand{\Riem}{\mathrm{Riem}}
\newcommand{\PhiK}{\Phi_K}
\newcommand{\ip}[2]{\left\langle #1,#2\right\rangle_{\HS}}

\title[A qutrit counterexample to Conjecture 4.9]{A minimal qutrit counterexample to Conjecture~4.9 of Lesniewski and Ruskai}
\author{Domingos S. P. Salazar}
\address{Unidade de Educa\c{c}\~ao a Dist\^ancia e Tecnologia, Universidade Federal Rural de Pernambuco, 52171-900 Recife, Pernambuco, Brazil}
\date{August 15, 2026}
\subjclass[2020]{81P45, 53B21, 47A63, 94A17}
\keywords{monotone Riemannian metric, contraction coefficient, quantum channel, Fisher--Rao metric, strong data processing}

\begin{document}

\begin{abstract}
Lesniewski and Ruskai conjectured that the contraction coefficient of every monotone Riemannian metric under a unital stochastic map equals the Hilbert--Schmidt contraction on the traceless subspace. We disprove the conjecture with an explicit entanglement-breaking qutrit channel induced by a doubly stochastic $3\times3$ matrix. A faithful diagonal state and a commuting traceless tangent give, simultaneously for every normalized monotone metric, the exact lower bound
\[
  \eta^{\Riem}_{\kappa}(\PhiK)\ge \frac{8896}{20007}
  >\frac{62+2\sqrt{61}}{225}
  =\Lambda_2(\PhiK^{\dagger}\PhiK).
\]
The counterexample is entirely classical on a maximal abelian subalgebra. A theorem of Hiai and Ruskai establishes the conjectured identity for all unital qubit maps, so dimension three is minimal among full matrix algebras.
\end{abstract}

\maketitle

\begin{center}
\begin{minipage}{0.94\textwidth}
\small\textbf{AI-use disclosure.}
GPT 5.6 Sol was used to assist with literature retrieval, exact symbolic arithmetic checks, and preparation of an initial draft. Responsibility for the mathematical statements and the final manuscript rests with the author.
\end{minipage}
\end{center}

\section{Introduction}

A monotone quantum Riemannian metric gives each faithful state $\rho$ a positive quadratic form on traceless Hermitian perturbations and contracts under stochastic maps. Lesniewski and Ruskai introduced contraction coefficients for this full class of metrics and, in Conjecture~4.9 of \cite{LR1999}, proposed a striking rigidity statement: if a stochastic map $\Phi$ is unital, then the global metric contraction should be independent of the chosen monotone metric and should equal the squared Hilbert--Schmidt norm of $\Phi$ on traceless matrices.

The conjectured value is necessarily a lower bound. At the maximally mixed state every normalized monotone metric is proportional to the Hilbert--Schmidt form, so the local contraction there is
\[
 \Lambda_2(\Phi^\dagger\Phi)
 =\sup_{\substack{A=A^*,\ \Tr A=0\\A\ne0}}
 \frac{\Tr|\Phi(A)|^2}{\Tr|A|^2}.
\]
The conjecture asserts that optimizing also over the base state can never increase this value.

We show that this assertion fails already for a classical doubly stochastic channel on three symbols, embedded as an entanglement-breaking channel on $M_3(\mathbb C)$. The same commuting witness works for every monotone quantum metric because all such metrics restrict to Fisher--Rao geometry on a commutative algebra. The exact gap is
\[
 \frac{8896}{20007}
 -\frac{62+2\sqrt{61}}{225}
 \approx 9.9664\times10^{-2}.
\]
The construction also identifies the obstruction: unitality fixes the uniform distribution, but the global Fisher contraction may be larger at a nonuniform distribution.

Hiai and Ruskai later proved the conjectured identity for every unital qubit map \cite{HR2016}. Consequently, the qutrit example below is dimension-minimal among channels $M_d\to M_d$.

\paragraph{Literature status.}
The conjecture was stated in the 1998 preprint and 1999 journal article \cite{LR1999}. A targeted search through August 15, 2026 for the exact conjecture number, its displayed equality, and forward citations did not locate a published resolution in dimension three or higher. Because the present construction is classical on the diagonal algebra, an equivalent counterexample could nevertheless occur in the classical $\chi^2$ strong-data-processing literature under different terminology. We therefore make no unconditional first-priority claim.

\section{Monotone metrics and the conjecture}

Let $\Md$ be the algebra of complex $d\times d$ matrices, equipped with the Hilbert--Schmidt inner product
\[
 \ip{X}{Y}=\Tr X^*Y.
\]
Let $\Dplus$ be the faithful density matrices and let
\[
 \Htr=\{A\in\Md:A=A^*,\ \Tr A=0\}
\]
be their common tangent space. For $\rho\in\Dplus$, denote left and right multiplication by
\[
 L_\rho(X)=\rho X,
 \qquad
 R_\rho(X)=X\rho.
\]

We use the standard normalized parametrization of monotone metrics \cite{Petz1996,PetzSudar1996,HR2016}. Let $\mathcal K$ be the set of functions $\kappa:(0,\infty)\to(0,\infty)$ that are operator convex and satisfy
\[
 \kappa(1)=1,
 \qquad
 x\kappa(x)=\kappa(x^{-1}).
\]
For $\kappa\in\mathcal K$, set
\begin{equation}\label{eq:omega}
 \Omega_\rho^\kappa
 =R_\rho^{-1}\,\kappa(L_\rho R_\rho^{-1}),
\end{equation}
and define
\begin{equation}\label{eq:metric}
 \gamma_\rho^\kappa(A,B)
 =\ip{A}{\Omega_\rho^\kappa(B)}.
\end{equation}
Every normalized monotone Riemannian metric is obtained in this way. The original $g$-parametrization of Lesniewski and Ruskai determines the same metric after symmetrization and normalization; positive rescaling does not affect a contraction coefficient.

For a completely positive trace-preserving map $\Phi:M_d\to M_{d'}$, its metric contraction coefficient is
\begin{equation}\label{eq:eta}
 \eta_\kappa^{\Riem}(\Phi)
 =\sup_{\rho\in\Dplus}
  \sup_{\substack{A\in\Htr\\A\ne0}}
 \frac{\gamma_{\Phi(\rho)}^\kappa(\Phi(A),\Phi(A))}
      {\gamma_\rho^\kappa(A,A)},
\end{equation}
where the supremum is restricted to $\rho$ for which $\Phi(\rho)$ is faithful. This restriction is immaterial for the witness below.

\begin{lemma}[Commuting reduction]\label{lem:commuting}
If $\rho\in\Dplus$ and $A\in\Htr$ commute, then for every $\kappa\in\mathcal K$,
\begin{equation}\label{eq:fisher}
 \gamma_\rho^\kappa(A,A)=\Tr \rho^{-1}A^2.
\end{equation}
Thus all normalized monotone metrics coincide with the Fisher--Rao metric on a commutative algebra.
\end{lemma}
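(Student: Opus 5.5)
The plan is to diagonalize everything at once. Since $\rho$ and $A$ are commuting Hermitian matrices, there is an orthonormal basis $\{e_i\}$ of $\mathbb C^d$ in which both are diagonal:
\[
 \rho=\sum_i p_i\,|e_i\rangle\langle e_i|,
 \qquad
 A=\sum_i a_i\,|e_i\rangle\langle e_i|,
\]
with $p_i>0$ and $a_i\in\mathbb R$. The superoperators $L_\rho$ and $R_\rho$ commute, and both are diagonal in the matrix-unit basis $E_{ij}=|e_i\rangle\langle e_j|$ of $\Md$:
\[
 L_\rho(E_{ij})=p_iE_{ij},
 \qquad
 R_\rho(E_{ij})=p_jE_{ij}.
\]
By the functional calculus for commuting diagonalizable operators, $\kappa(L_\rho R_\rho^{-1})E_{ij}=\kappa(p_i/p_j)E_{ij}$. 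Hence, by \eqref{eq:omega},
\[
 \Omega_\rho^\kappa(E_{ij})=p_j^{-1}\kappa(p_i/p_j)\,E_{ij}.
\]

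The key step is to apply this to $A=\sum_i a_iE_{ii}$, which lies in the span of the diagonal matrix units. For $i=j$ the ratio $p_i/p_j$ equals $1$, and the normalization $\kappa(1)=1$ gives $\Omega_\rho^\kappa(E_{ii})=p_i^{-1}E_{ii}$. Therefore
\[
 \Omega_\rho^\kappa(A)=\sum_i a_ip_i^{-1}E_{ii}=\rho^{-1}A.
\]
Substituting into \eqref{eq:metric} and using $A=A^*$ yields
\[
 \gamma_\rho^\kappa(A,A)=\Tr A\rho^{-1}A=\Tr\rho^{-1}A^2=\sum_i a_i^2/p_i.
\]
This is \eqref{eq:fisher}, and it is visibly independent of $\kappa$. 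It is also the classical Fisher--Rao form $\sum_i a_i^2/p_i$ for the distribution $(p_i)$ and tangent vector $(a_i)$, which gives the final sentence of the lemma. Note that the symmetry condition $x\kappa(x)=\kappa(x^{-1})$ and operator convexity are never used; only $\kappa(1)=1$ is needed.

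The only point requiring care is the functional-calculus step. One must justify that $\kappa(L_\rho R_\rho^{-1})$ acts on $E_{ij}$ by the scalar $\kappa(p_i/p_j)$. This holds because $L_\rho R_\rho^{-1}$ is a positive operator on $\Md$ with respect to the Hilbert--Schmidt inner product, and the $E_{ij}$ form an orthonormal eigenbasis for it. Beyond this, the argument is routine bookkeeping, so I do not anticipate a genuine obstacle.
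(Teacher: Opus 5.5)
Your proof is correct and takes essentially the paper's approach. The paper simply observes that $[\rho,A]=0$ makes $A$ a fixed point of $L_\rho R_\rho^{-1}$, so functional calculus with $\kappa(1)=1$ gives $\Omega_\rho^\kappa(A)=A\rho^{-1}$; your simultaneous diagonalization and matrix-unit eigenbasis spell out that same step more explicitly.
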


\begin{proof}
Since $[\rho,A]=0$, one has
\[
 (L_\rho R_\rho^{-1})(A)=A.
\]
Functional calculus and $\kappa(1)=1$ therefore give
\[
 \Omega_\rho^\kappa(A)
 =R_\rho^{-1}\kappa(L_\rho R_\rho^{-1})(A)
 =A\rho^{-1}.
\]
Substitution into \eqref{eq:metric} yields \eqref{eq:fisher}.
\end{proof}

For a unital trace-preserving map $\Phi:M_d\to M_d$, let $\Phi^\dagger$ denote its Hilbert--Schmidt adjoint and set
\begin{equation}\label{eq:lambda2def}
 \Lambda_2(\Phi^\dagger\Phi)
 =\sup_{\substack{A=A^*,\ \Tr A=0\\A\ne0}}
 \frac{\Tr|\Phi(A)|^2}{\Tr|A|^2}.
\end{equation}
Lesniewski--Ruskai Conjecture~4.9 states that
\begin{equation}\label{eq:conjecture}
 \eta_\kappa^{\Riem}(\Phi)
 =\Lambda_2(\Phi^\dagger\Phi)
 \qquad\text{for every }\kappa\in\mathcal K
\end{equation}
whenever $\Phi$ is unital.

\section{The qutrit channel}

Consider the matrix
\begin{equation}\label{eq:K}
 K=
 \begin{pmatrix}
  \frac13&0&\frac23\\[1mm]
  \frac35&\frac25&0\\[1mm]
  \frac1{15}&\frac35&\frac13
 \end{pmatrix}.
\end{equation}
Every row and every column sums to one. Define $\PhiK:M_3\to M_3$ by
\begin{equation}\label{eq:channel}
 \PhiK(X)
 =\sum_{i,j=1}^3
 K_{ij}\langle j|X|j\rangle\,|i\rangle\langle i|.
\end{equation}

\begin{proposition}\label{prop:channel}
The map $\PhiK$ is completely positive, trace preserving, unital, and entanglement breaking.
\end{proposition}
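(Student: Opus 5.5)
The plan is to write $\PhiK$ explicitly in Kraus form and check each property directly from the entries of $K$ in \eqref{eq:K}. For $i,j\in\{1,2,3\}$ set
\[
 V_{ij}=\sqrt{K_{ij}}\,|i\rangle\langle j|,
\]
which is legitimate because every entry of $K$ is nonnegative. Then $V_{ij}XV_{ij}^*=K_{ij}\langle j|X|j\rangle\,|i\rangle\langle i|$. Summing over $i,j$ recovers \eqref{eq:channel}, so $\PhiK(X)=\sum_{i,j}V_{ij}XV_{ij}^*$. A map with a Kraus representation is completely positive, so this settles complete positivity.

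For trace preservation, compute
\[
 \sum_{i,j}V_{ij}^*V_{ij}=\sum_j\Big(\sum_i K_{ij}\Big)|j\rangle\langle j|.
\]
This equals $\Id$ because each column of $K$ sums to one: the column sums are $\tfrac13+\tfrac35+\tfrac1{15}=1$, $0+\tfrac25+\tfrac35=1$ and $\tfrac23+0+\tfrac13=1$. Equivalently, one can compute $\Tr\PhiK(X)=\sum_j(\sum_iK_{ij})\langle j|X|j\rangle=\Tr X$ directly.

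For unitality, substitute $X=\Id$ into \eqref{eq:channel}. This gives $\PhiK(\Id)=\sum_i(\sum_jK_{ij})|i\rangle\langle i|$, which equals $\Id$ because the row sums are $\tfrac13+\tfrac23=1$, $\tfrac35+\tfrac25=1$ and $\tfrac1{15}+\tfrac35+\tfrac13=1$.

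For entanglement breaking, rewrite the channel in measure-and-prepare (Holevo) form:
\[
 \PhiK(X)=\sum_j\Tr\big(|j\rangle\langle j|X\big)\,\sigma_j,
 \qquad
 \sigma_j=\sum_iK_{ij}|i\rangle\langle i|.
\]
Here $\{|j\rangle\langle j|\}$ is a POVM, the computational-basis measurement. Each $\sigma_j$ is a density matrix, since its entries are nonnegative and the $j$th column of $K$ sums to one. Channels of this form are entanglement breaking by the Horodecki--Shor--Ruskai characterization. One can also see it directly: if $\omega$ is a bipartite state, then
\[
 (\PhiK\otimes\mathrm{id})(\omega)=\sum_j\sigma_j\otimes\Tr_1\big[(|j\rangle\langle j|\otimes\Id)\omega\big].
\]
This is a convex combination of product states, hence separable. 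The same point shows that all Kraus operators $V_{ij}$ have rank one, which is another standard criterion for entanglement breaking.

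No step is a real obstacle. The only points needing care are the exact rational arithmetic for the row and column sums, and keeping straight which index convention goes with trace preservation (columns) and which with unitality (rows).
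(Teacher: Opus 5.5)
Your proposal is correct and matches the paper's proof essentially step for step. It uses the same rank-one Kraus operators $V_{ij}=\sqrt{K_{ij}}\,|i\rangle\langle j|$, column sums for trace preservation, row sums for unitality, and the rank-one Kraus (equivalently measure-and-prepare) criterion for entanglement breaking, with your explicit row/column-sum arithmetic and direct separability computation for $(\PhiK\otimes\mathrm{id})(\omega)$ adding detail without changing the route.
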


\begin{proof}
The rank-one Kraus operators
\[
 V_{ij}=\sqrt{K_{ij}}\,|i\rangle\langle j|
\]
realize \eqref{eq:channel}. Column stochasticity gives
\[
 \sum_{i,j}V_{ij}^*V_{ij}=\Id,
\]
so $\PhiK$ is trace preserving. Row stochasticity gives
\[
 \sum_{i,j}V_{ij}V_{ij}^*=\Id,
\]
so it is unital. Since all Kraus operators have rank one---equivalently, the channel measures in the computational basis and prepares a diagonal state---it is entanglement breaking.
\end{proof}

\begin{proposition}[The conjectured spectral value]\label{prop:spectral}
For the channel \eqref{eq:channel},
\begin{equation}\label{eq:spectralvalue}
 \Lambda_2(\PhiK^\dagger\PhiK)
 =\frac{62+2\sqrt{61}}{225}.
\end{equation}
\end{proposition}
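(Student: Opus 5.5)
The plan is to reduce the supremum in \eqref{eq:lambda2def} to a finite-dimensional eigenvalue problem for the real matrix $K^{T}K$, and then read off the eigenvalues from its trace and determinant.

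\textbf{Step 1: reduction to diagonal tangents.} Write a traceless Hermitian $A$ as $A=D+O$, with $D=\sum_j\langle j|A|j\rangle\,|j\rangle\langle j|$ its diagonal part and $O$ its off-diagonal part. By \eqref{eq:channel}, $\PhiK(A)=\PhiK(D)$, since only the diagonal entries of $A$ enter. Moreover $D$ and $O$ are Hilbert--Schmidt orthogonal, so $\Tr A^2=\Tr D^2+\Tr O^2\ge \Tr D^2$. Hence the ratio in \eqref{eq:lambda2def} can only increase if $O$ is replaced by $0$, and it suffices to take $A=\diag(x)$ with $x\in\mathbb R^3$ and $\sum_j x_j=0$. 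For such $A$ we have $\PhiK(A)=\diag(Kx)$, so the ratio equals $\|Kx\|^2/\|x\|^2=\langle x,K^{T}Kx\rangle/\|x\|^2$.

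\textbf{Step 2: invariance of the complement of $\mathbf e=(1,1,1)^{T}$.} Row stochasticity gives $K\mathbf e=\mathbf e$, and column stochasticity gives $K^{T}\mathbf e=\mathbf e$. Hence $K^{T}K\mathbf e=\mathbf e$. Since $K^{T}K$ is symmetric, the plane $\mathbf e^{\perp}$, which is exactly the set of traceless diagonal tangents, is invariant under $K^{T}K$. Therefore $\Lambda_2(\PhiK^\dagger\PhiK)$ is the larger of the two eigenvalues $\lambda_\pm$ of $K^{T}K$ restricted to $\mathbf e^{\perp}$.

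\textbf{Step 3: the arithmetic.} I would avoid diagonalizing on the plane directly. The eigenvalue $1$ on $\mathbf e$ splits off, so
\[
\lambda_++\lambda_-=\Tr(K^{T}K)-1,\qquad \lambda_+\lambda_-=\det(K^{T}K)=(\det K)^2 .
\]
Write the columns of $K$ as $\tfrac1{15}(5,9,1)$, $\tfrac1{15}(0,6,9)$ and $\tfrac1{15}(10,0,5)$. Their squared norms are $107/225$, $117/225$ and $125/225$. So $\Tr(K^{T}K)=349/225$ and $\lambda_++\lambda_-=124/225$.

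Expanding the determinant gives $\det K=\pm 4/15$ (I would recheck the sign, though only the square is needed). Hence $\lambda_+\lambda_-=16/225$. The eigenvalues then solve
\[
225\lambda^2-124\lambda+16=0 ,
\]
whose roots are $\lambda_\pm=(62\pm2\sqrt{61})/225$, since $62^2-16\cdot225=244=4\cdot61$. The larger root $\lambda_+$ is the value claimed in \eqref{eq:spectralvalue}.

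The main point needing care is Step 1: \eqref{eq:lambda2def} is stated over all Hermitian traceless $A$, and the orthogonality of the diagonal and off-diagonal parts must be used to discard $O$. After that, the only risk is arithmetic, in particular the determinant of $K$. I would confirm it independently through the identity $\det(K^{T}K)=\det G$, where $G$ is the Gram matrix of the columns, with off-diagonal entries $63/225$, $55/225$ and $45/225$.
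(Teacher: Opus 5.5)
Your proposal is correct and follows essentially the same route as the paper. Both arguments reduce to the diagonal subspace, where $\PhiK^\dagger\PhiK$ acts as $K^{\mathsf T}K$, split off the eigenvalue $1$ on $(1,1,1)^{\mathsf T}$, and solve $225\lambda^2-124\lambda+16=0$. The paper does the reduction by noting that $\PhiK^\dagger\PhiK$ vanishes on off-diagonal matrices, whereas you use Hilbert--Schmidt orthogonality of the diagonal and off-diagonal parts. The paper expands the full characteristic polynomial, whereas you read the quadratic off $\Tr(K^{\mathsf T}K)=349/225$ and $\det(K^{\mathsf T}K)=(\det K)^2=16/225$. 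Your arithmetic checks out, and in fact $\det K=+4/15$.
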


\begin{proof}
The channel annihilates every off-diagonal matrix unit. On diagonal matrices, identified with column vectors in $\mathbb R^3$, it acts as $K$. Consequently, $\PhiK^\dagger\PhiK$ acts as $K^{\mathsf T}K$ on the diagonal subspace and as zero on the off-diagonal subspace. Direct multiplication gives
\begin{equation}\label{eq:KtK}
 K^{\mathsf T}K=
 \begin{pmatrix}
  \frac{107}{225}&\frac7{25}&\frac{11}{45}\\[1mm]
  \frac7{25}&\frac{13}{25}&\frac15\\[1mm]
  \frac{11}{45}&\frac15&\frac59
 \end{pmatrix}.
\end{equation}
Its characteristic polynomial is
\begin{equation}\label{eq:charpoly}
 \det(\lambda\Id-K^{\mathsf T}K)
 =\frac1{225}(\lambda-1)
   (225\lambda^2-124\lambda+16).
\end{equation}
Because $K$ is doubly stochastic, the eigenvalue $1$ belongs to the identity vector $(1,1,1)^{\mathsf T}$. The other two diagonal eigenvalues, both lying in the traceless subspace, are
\[
 \lambda_\pm=\frac{62\pm2\sqrt{61}}{225}.
\]
The off-diagonal eigenvalues are zero. Hence the largest eigenvalue on the full traceless subspace is $\lambda_+$.
\end{proof}

\section{Universal violation of the conjecture}

\begin{theorem}[Counterexample to Conjecture~4.9]\label{thm:main}
For the unital entanglement-breaking qutrit channel $\PhiK$ in \eqref{eq:channel} and every normalized monotone Riemannian metric $\kappa\in\mathcal K$,
\begin{equation}\label{eq:mainineq}
 \eta_\kappa^{\Riem}(\PhiK)
 \ge \frac{8896}{20007}
 >\frac{62+2\sqrt{61}}{225}
 =\Lambda_2(\PhiK^\dagger\PhiK).
\end{equation}
In particular, Lesniewski--Ruskai Conjecture~4.9 is false.
\end{theorem}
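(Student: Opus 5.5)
The plan is to exhibit one explicit pair $(\rho,A)$ with $\rho\in\Dplus$ diagonal and $A\in\Htr$ diagonal, and to evaluate the ratio in \eqref{eq:eta} exactly. Write $\rho=\diag(p)$ with $p\in\mathbb R^3$, $p_j>0$, $\sum_j p_j=1$, and $A=\diag(a)$ with $\sum_j a_j=0$, $a\neq0$. By \eqref{eq:channel}, $\PhiK(\rho)=\diag(Kp)$ and $\PhiK(A)=\diag(Ka)$. Since $K$ has strictly positive row sums and $p>0$, the vector $Kp$ is strictly positive, so $\PhiK(\rho)$ is faithful and the restriction in \eqref{eq:eta} is met. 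Both pairs $(\rho,A)$ and $(\PhiK(\rho),\PhiK(A))$ commute, so Lemma~\ref{lem:commuting} applies to numerator and denominator alike. This gives, for every $\kappa\in\mathcal K$,
\[
 \eta_\kappa^{\Riem}(\PhiK)\ \ge\ R(p,a):=\frac{\sum_i (Ka)_i^2/(Kp)_i}{\sum_j a_j^2/p_j}.
\]
This bound is independent of $\kappa$, which is why a single witness defeats the conjecture for all monotone metrics simultaneously.

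The main step is choosing $(p,a)$ so that $R(p,a)=8896/20007$ exactly. I would find the witness by numerically maximizing the classical $\chi^2$ contraction ratio $R(p,a)$ over the simplex and the traceless plane. For fixed $p$, maximizing over $a$ is a generalized eigenvalue problem: find the second eigenvalue of $K^{\mathsf T}\diag(Kp)^{-1}K$ relative to $\diag(p)^{-1}$, where the trivial eigenvector $a=p$ has eigenvalue $1$. The heuristic is to push $p$ toward a nonuniform point, with small mass on a coordinate that $K$ spreads out. I would then round the optimizer to nearby rationals with small denominators. The factorization $20007=3^4\cdot13\cdot19$ suggests denominators of $(Kp)_i$ and $p_j$ built from $3,5,13,19$. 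I would adjust the rationals until the ratio reduces exactly to $8896/20007$, and then verify this by direct rational arithmetic: compute $Kp$ and $Ka$, the three quotients $(Ka)_i^2/(Kp)_i$, and the denominator sum. I expect this search-and-rationalize step to be the main obstacle. The verification itself is routine once the witness is in hand.

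The remaining step is the strict inequality $8896/20007>(62+2\sqrt{61})/225$, with the equality on the right supplied by Proposition~\ref{prop:spectral}. Multiplying by $225$, this is equivalent to
\[
 \frac{225\cdot 8896}{20007}-62>2\sqrt{61}.
\]
The left side is a positive rational, approximately $38.04$, so squaring both sides preserves the inequality. This reduces the claim to a rational comparison with $244$, which holds by a wide margin since $38^2\approx1444$. Together with the displayed lower bound, this proves \eqref{eq:mainineq} and hence the failure of \eqref{eq:conjecture}.
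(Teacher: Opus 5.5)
Your reduction matches the paper's: diagonal $\rho=\diag(p)$ and $A=\diag(a)$, faithfulness of $\PhiK(\rho)=\diag(Kp)$, and \cref{lem:commuting} applied at input and output, giving $\eta_\kappa^{\Riem}(\PhiK)\ge R(p,a)$ uniformly in $\kappa$. Your closing comparison by squaring is also valid. It differs only cosmetically from the paper's intermediate rational $26/75$: exactly, $225\cdot 8896=2001600>100\cdot 20007$, so the left side exceeds $38$ and its square exceeds $244$.

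\textbf{The gap.} The first inequality in \eqref{eq:mainineq} is the whole content of the theorem, and you never establish it. You exhibit no pair $(p,a)$ and compute no ratio. You propose to maximize numerically and then ``adjust the rationals until the ratio reduces exactly to $8896/20007$''. That is not a proof, and not even a well-defined procedure: nothing guarantees that rounding an optimizer lands on that particular fraction. You also do not need exact equality. Any witness whose ratio is verified in exact arithmetic to satisfy $R(p,a)\ge 8896/20007$ would suffice.

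\textbf{The missing witness.} The paper takes $p=(\tfrac1{10},\tfrac45,\tfrac1{10})$ and $a=(0,1,-1)$. Then:
\begin{itemize}
\item $Kp=(\tfrac1{10},\tfrac{19}{50},\tfrac{13}{25})$ and $Ka=(-\tfrac23,\tfrac25,\tfrac4{15})$;
\item the denominator is $\tfrac{1}{4/5}+\tfrac{1}{1/10}=\tfrac{45}{4}$;
\item the numerator is $\tfrac{40}{9}+\tfrac{8}{19}+\tfrac{16}{117}=\tfrac{11120}{2223}$;
\item the quotient is $\tfrac{8896}{20007}$.
\end{itemize}
Your guess that $13$ and $19$ appear in the denominators was right; they come from $(Kp)_3$ and $(Kp)_2$.

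\textbf{Your heuristic.} The stated heuristic is not what makes this witness work. The tangent moves mass out of the light input $3$. Its column $(\tfrac23,0,\tfrac13)^{\mathsf T}$ sends most of that mass to output $1$. Output $1$ stays light, $(Kp)_1=\tfrac1{10}$, because it is fed only by the two light inputs. So that piece contracts by roughly $(\tfrac23)^2=\tfrac49$, which exceeds $\lambda_+\approx 0.345$. The single term $(\tfrac23)^2/\tfrac1{10}=\tfrac{40}{9}$ already supplies most of the numerator. The witness comes from \emph{concentrating} a light input onto a light output, not from spreading it out. Once this explicit witness is inserted, your argument is complete.
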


\begin{proof}
Choose the faithful diagonal state and traceless diagonal tangent
\begin{equation}\label{eq:witness}
 \rho_\star
 =\diag\!\left(\frac1{10},\frac45,\frac1{10}\right),
 \qquad
 A_\star=\diag(0,1,-1).
\end{equation}
Both the input pair and its image under $\PhiK$ commute. By \cref{lem:commuting}, the metric quotient is therefore independent of $\kappa$.

At the input,
\begin{equation}\label{eq:inputmetric}
 \gamma_{\rho_\star}^\kappa(A_\star,A_\star)
 =\frac{1}{4/5}+\frac{1}{1/10}
 =\frac{45}{4}.
\end{equation}
On diagonal entries, the channel acts as $K$, and
\begin{equation}\label{eq:images}
 K
 \begin{pmatrix}1/10\\[1mm]4/5\\[1mm]1/10\end{pmatrix}
 =\begin{pmatrix}1/10\\[1mm]19/50\\[1mm]13/25\end{pmatrix},
 \qquad
 K
 \begin{pmatrix}0\\[1mm]1\\[1mm]-1\end{pmatrix}
 =\begin{pmatrix}-2/3\\[1mm]2/5\\[1mm]4/15\end{pmatrix}.
\end{equation}
Thus
\begin{align}
 \gamma_{\PhiK(\rho_\star)}^\kappa
 \bigl(\PhiK(A_\star),\PhiK(A_\star)\bigr)
 &=\frac{(2/3)^2}{1/10}
   +\frac{(2/5)^2}{19/50}
   +\frac{(4/15)^2}{13/25} \notag\\
 &=\frac{11120}{2223}.
 \label{eq:outputmetric}
\end{align}
The corresponding admissible quotient in \eqref{eq:eta} is
\begin{equation}\label{eq:quotient}
 \frac{11120/2223}{45/4}
 =\frac{8896}{20007}.
\end{equation}
It remains only to verify the strict comparison with \eqref{eq:spectralvalue}. We use the rational intermediate value $26/75$:
\begin{equation}\label{eq:firstcompare}
 \frac{8896}{20007}-\frac{26}{75}
 =\frac{49006}{500175}>0,
\end{equation}
and
\begin{equation}\label{eq:secondcompare}
 \frac{26}{75}-\frac{62+2\sqrt{61}}{225}
 =\frac{16-2\sqrt{61}}{225}>0,
\end{equation}
because $\sqrt{61}<8$. Combining \eqref{eq:quotient}--\eqref{eq:secondcompare} proves \eqref{eq:mainineq} for every $\kappa\in\mathcal K$.
\end{proof}

The proof can be stated without quantum notation. For a doubly stochastic matrix $K$, define its global Fisher--Rao contraction by
\begin{equation}\label{eq:classicaleta}
 \eta_{\mathrm{FR}}(K)
 =\sup_{\substack{p_i>0,\ \sum_i p_i=1}}
  \sup_{\substack{a\ne0,\ \sum_i a_i=0}}
 \frac{\displaystyle\sum_i\frac{(Ka)_i^2}{(Kp)_i}}
      {\displaystyle\sum_j\frac{a_j^2}{p_j}}.
\end{equation}
The uniform base point yields the squared second singular value of $K$. The witness in \eqref{eq:witness} proves that the global coefficient can be strictly larger.

\begin{corollary}[Classical form]\label{cor:classical}
For the doubly stochastic matrix $K$ in \eqref{eq:K},
\[
 \eta_{\mathrm{FR}}(K)
 \ge \frac{8896}{20007}
 >s_2(K)^2
 =\frac{62+2\sqrt{61}}{225},
\]
where $s_2(K)$ is the second singular value of $K$.
\end{corollary}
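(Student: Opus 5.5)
The plan is to transfer Theorem~\ref{thm:main} to the classical setting by identifying diagonal matrices with vectors in $\mathbb R^3$. The first step is to fix the correspondence: a probability vector $p$ with $p_i>0$ goes to $\rho=\diag(p)\in\Dplus$, and a vector $a$ with $\sum_i a_i=0$ goes to $A=\diag(a)\in\Htr$. These commute, so Lemma~\ref{lem:commuting} gives
\[
\gamma^\kappa_\rho(A,A)=\sum_j \frac{a_j^2}{p_j}.
\]
Since $\PhiK(\diag(x))=\diag(Kx)$ by \eqref{eq:channel}, the images are again commuting diagonal matrices. Applying the same lemma to the output gives
\[
\gamma^\kappa_{\PhiK(\rho)}\bigl(\PhiK(A),\PhiK(A)\bigr)=\sum_i \frac{(Ka)_i^2}{(Kp)_i}.
\]
So each quotient in \eqref{eq:classicaleta} equals an admissible quotient in \eqref{eq:eta}. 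In fact we need only one explicit quotient.

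For the lower bound, take $p=(1/10,4/5,1/10)$ and $a=(0,1,-1)$, which are exactly the diagonals of the witness \eqref{eq:witness}. The computations \eqref{eq:inputmetric}--\eqref{eq:quotient} are purely classical, since they only involve $K$ acting on these vectors. They give the quotient $8896/20007$, so $\eta_{\mathrm{FR}}(K)\ge 8896/20007$. The strict inequality against $(62+2\sqrt{61})/225$ is then \eqref{eq:firstcompare}--\eqref{eq:secondcompare}, reused verbatim.

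The remaining step is to identify $s_2(K)^2$ with $(62+2\sqrt{61})/225$. The squared singular values of $K$ are the eigenvalues of $K^{\mathsf T}K$. By \eqref{eq:charpoly}, these are $1$ and $\lambda_\pm$. Ordering them as $1>\lambda_+>\lambda_-$, which holds because $\lambda_+<16/75<1$ from $\sqrt{61}<8$, gives $s_2(K)^2=\lambda_+$. This agrees with Proposition~\ref{prop:spectral}.

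The only point needing care is the remark that the uniform base point yields exactly $s_2(K)^2$, but it is not part of the stated inequality. For completeness: at $p=u=(1/3,1/3,1/3)$ we have $Ku=u$, so the quotient is $\|Ka\|^2/\|a\|^2$ over $a\perp(1,1,1)$. Since $(1,1,1)$ is an eigenvector of $K^{\mathsf T}K$, the supremum of this quotient is $\lambda_+$. I expect no real obstacle beyond the bookkeeping of the diagonal identification.
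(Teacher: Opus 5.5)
Your proposal is correct and follows the paper's route. The same diagonal witness $p=(1/10,4/5,1/10)$, $a=(0,1,-1)$ gives the quotient $8896/20007$, and $s_2(K)^2=\lambda_+$ is read off from \eqref{eq:charpoly} as in Proposition~\ref{prop:spectral}. One small arithmetic slip: $\sqrt{61}<8$ gives $\lambda_+<78/225=26/75$, not $\lambda_+<16/75$ (in fact $\lambda_+\approx0.345>16/75$), but $26/75<1$ still gives the ordering $1>\lambda_+>\lambda_-$ you need.
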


\section{Minimality and mechanism}

\begin{corollary}[Dimension minimality]\label{cor:minimal}
Among channels on full matrix algebras $M_d(\mathbb C)$, dimension $d=3$ is the smallest dimension in which Conjecture~4.9 can fail.
\end{corollary}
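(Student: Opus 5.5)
The corollary has two halves, and I will prove them separately.

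\textbf{Existence in $d=3$.} This is immediate from \cref{thm:main}. The map $\PhiK$ is a unital channel $M_3(\mathbb C)\to M_3(\mathbb C)$ by \cref{prop:channel}. For it, \eqref{eq:mainineq} gives $\eta_\kappa^{\Riem}(\PhiK)>\Lambda_2(\PhiK^\dagger\PhiK)$ for every $\kappa\in\mathcal K$, so \eqref{eq:conjecture} fails at $d=3$.

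\textbf{No failure for $d\le 2$.} I will check $d=1$ and $d=2$ in turn.

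For $d=1$, the traceless Hermitian subspace $\Htr$ is $\{0\}$. Both suprema in \eqref{eq:eta} and \eqref{eq:lambda2def} are then vacuous, with the same convention on each side, so the conjecture holds trivially.

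For $d=2$, I will invoke the theorem of Hiai and Ruskai \cite{HR2016}: for every unital stochastic map on $M_2(\mathbb C)$ and every monotone metric, the Riemannian contraction coefficient equals $\Lambda_2(\Phi^\dagger\Phi)$. This is exactly \eqref{eq:conjecture} for $d=2$.

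\textbf{Main obstacle.} The only real care point is conventions in the $d=2$ step. I must check that the Hiai--Ruskai statement uses the same normalized class $\mathcal K$ and the same faithful-output restriction as \eqref{eq:eta}. This should be fine, for two reasons. First, as the paper notes, positive rescaling and the passage from the $g$-parametrization to the $\kappa$-parametrization do not change contraction coefficients. Second, for unital $\Phi$ one can restrict to faithful $\rho$ at no cost: the supremum is attained or approached on faithful $\rho$ with faithful $\Phi(\rho)$, for example near the maximally mixed state. With these points settled, $d=3$ is the smallest dimension in which the conjecture can fail.
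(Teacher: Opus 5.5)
Your proposal is correct and matches the paper's proof: $d=1$ is trivial, $d=2$ follows from Hiai--Ruskai's Theorem~6.1 (which the paper states as $\eta_\kappa^{\Riem}(\Phi_T)=\|T\|_\infty^2$, exactly the Hilbert--Schmidt coefficient on traceless qubit matrices), and $d=3$ comes from \cref{thm:main}. On your convention check, the faithful-output restriction is harmless for a simpler reason: a unital positive map sends $\rho\ge\epsilon\Id$ to $\Phi(\rho)\ge\epsilon\Id$, so no faithful $\rho$ is ever excluded, and appealing to where the supremum is approached is unnecessary and, beyond $d=2$, misleading, since the qutrit witness shows the supremum need not lie near $\Id/d$.
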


\begin{proof}
The one-dimensional case is trivial. Hiai and Ruskai proved that every unital positive trace-preserving qubit map, represented on the Bloch ball by a real matrix $T$, satisfies
\[
 \eta_\kappa^{\Riem}(\Phi_T)=\|T\|_\infty^2
 \qquad\text{for every }\kappa\in\mathcal K
\]
\cite[Theorem~6.1]{HR2016}. This is exactly the Hilbert--Schmidt spectral coefficient on the traceless qubit subspace. \Cref{thm:main} supplies a failure for $d=3$.
\end{proof}

\begin{remark}[Why unitality is insufficient]
At the maximally mixed state $\rho=\Id/d$, every normalized monotone metric is a scalar multiple of the Hilbert--Schmidt metric, so its local contraction is governed by $\Lambda_2(\Phi^\dagger\Phi)$. The true coefficient \eqref{eq:eta}, however, is a supremum over all base states. In the commutative sector, changing the base point changes the Euclidean norm into the inverse-covariance norm
\[
 a\longmapsto \sum_i\frac{a_i^2}{p_i}.
\]
The nonuniform state $p=(1/10,4/5,1/10)$ exposes a direction for which this reweighting is amplified by $K$ more strongly than the uniform-state singular value predicts. Thus the failure requires neither quantum coherence nor entanglement; it is already a failure of global Fisher geometry for a doubly stochastic Markov kernel.
\end{remark}

\begin{remark}[Uniformity over all monotone metrics]
The conclusion is stronger than finding one metric for which \eqref{eq:conjecture} fails. Since the witness and its image commute, the quotient \eqref{eq:quotient} is common to every normalized monotone metric. Hence one and the same state--tangent pair violates the conjectured value simultaneously for the entire Petz--Lesniewski--Ruskai family.
\end{remark}

\section*{Acknowledgments}
The author thanks the developers and maintainers of the open scientific literature infrastructure used to verify the historical statements and references.

\end{document}